\ifpdf\DeclareGraphicsRule{.1}{mps}{*}{} 
\renewcommand{\S}{\mathbb S}
\newcommand{\R}{\mathbb R}
\newcommand{\eps}{\varepsilon}
\newcommand{\argmin}{\operatornamewithlimits{argmin}}
\DeclareMathOperator{\diam}{diam}
\DeclareMathOperator{\Span}{span}
\DeclareMathOperator{\nbot}{\not\hspace{-1mm}\bot}
\newcommand{\lb}{\textsc{Lb}}
\newcommand{\Mid}{\textsc{Mid}}
\newcommand{\LO}{\mathcal{L}}
\theoremstyle{plain}
\newtheorem{lemma}{Lemma}
\newtheorem{theorem}[lemma]{Theorem}
\theoremstyle{definition}
\newtheorem{remark}[lemma]{Remark}
\begin{document}


\title{An algorithm for computing Fréchet means on the sphere}

\author{Gabriele Eichfelder\footnote{Institute for Mathematics, Technische Universität Ilmenau, Weimarer Straße 25, 98693~Ilmenau, Germany, \{gabriele.eichfelder, thomas.hotz\}@tu-ilmenau.de}	\and
        Thomas Hotz$^*$		
        \and
        Johannes Wieditz\footnote{Institute for Mathematical Stochastics, Georg-August-Universität Göttingen, Goldschmidtstraße 7, 37077 Göttingen, Germany, johannes.wieditz@uni-goettingen.de}
}

\date{\today}

\maketitle


\begin{abstract}
\noindent
For most optimisation methods an essential assumption is the vector space structure of the feasible set. This condition is not fulfilled if we consider optimisation problems over the sphere. We present an algorithm for solving a special global problem over the sphere, namely the determination of Fréchet means, which are points minimising the mean distance to a given set of points. The Branch and Bound method derived needs no further assumptions on the input data, but is able to cope with this objective function which is neither convex nor differentiable. The algorithm's performance is tested on simulated and real data.\\

\noindent
Keywords: Branch and Bound, Fréchet means, Global optimisation
\end{abstract}

\section{Introduction}
\label{sec:1}

Data assuming values on the unit sphere $\S^2$, also known as \textit{spherical data}, arise frequently in applications, for example as directions of remanent magnetisation in soil samples, as arrival directions of cosmic particle showers, or as wind directions. 
For general information on the statistical analysis of spherical data, see e.g. \cite{Fisher87}, \cite{Mardia2000}.

When considering optimisation problems on the sphere with these input data, one quickly encounters problems because most approaches fail due to the missing vector space structure. Neither the sum of two points of the sphere, nor the multiplication with a real number is an element of the sphere anymore. 

A well-known example for such an optimisation problem arises from non-Euclidean statistics. There, one aim is to define a \textit{mean} on the sphere. Because of the reasons named above, how to define such a \textit{mean} is at first sight unclear. Thus, often a more intrinsic point of view is used, namely considering the sphere as a Riemannian manifold with arc length as the metric measuring distances between two points. A mean may then be defined to be a point which minimises the average distance to some power measured to a set of given data points~$x_1,\dots,x_n\in\S^2$, i.e. an element of
\begin{equation}
\label{eq:1}
\argmin\bigg\{ \frac1n \sum_{i=1}^n d^p(m,x_i) \, \bigg\vert \, m\in \S^2  \bigg\}
\end{equation}
for a fixed non-negative number $p$. This is the set of so-called \textit{Fréchet-$p$-means}. 
Note that on a Euclidean space, the case $p=2$ leads to the usual notion of mean, whereas the case $p=1$ results in a (spatial) median.

While computing Fr\'echet-$2$-means on the 1-sphere, i.e. the circle, is well investigated and efficiently possible, cf. \cite{Hotz2015}, \cite{McKilliam12}, the case of the 2-sphere turns out to be quite challenging. Even for real numbers $p>1$, the objective function is neither differentiable nor convex in the sense of manifolds \cite{Afsari11}. 
Therefore, Fr\'echet-$p$-means need not be unique; however, since the objective function is continuous and the sphere is compact, there does always exist a Fr\'echet-$p$-mean.

In the literature, there are many different ways to tackle the problem of computing a Fréchet mean on the sphere. A probabilistic method is introduced in \cite{Arnaudon2016}, where the authors use a Monte Carlo approach. Moreover, \cite{Afsari13} developed a steepest descent method for the case that the given points are located in one half of the sphere which implies that the unique Fr\'echet-2-mean lies in that half sphere where the objective function in addition is convex. 

Furthermore, there exists a variety of problems similar to (\ref{eq:1}) each making slightly different assumptions or focussing on different aspects. If $d$ is the Euclidean distance, there are several results about lower bounds and characterising means of the function $m\mapsto \sum_{i=1}^n d^p(x_i,m)$ where the latter needs some further assumptions for the points, cf. \cite{Stolarsky75}, \cite{Wagner90}, \cite{Weiszfeld2009}. Fréchet functions and Fréchet means over more general spaces were considered in \cite{Fletcher2008} and \cite{Palfia2011}. An essential assumption there is an upper bound for the so-called local injectivity radius which is violated in our case of the 2-sphere.



A deterministic approach for computing the set of Fréchet means on the 2-sphere for general configurations of points based on~(\ref{eq:1}) has not been considered so far;
general deterministic optimisation algorithms on manifolds appear to be local search algortithms, see e.g. \cite{Absil2007}.

To fill this gap, we will in the following introduce a Branch and Bound method which is quite universal in the sense that it is able to cope with the non-differentiability and non-convexity of the objective function, requiring no assumptions on the given points $x_1,\dots,x_n\in\S^2$. Our approach will be outlined in Section~\ref{sec:2} while lower bounds for the objective function will be established in Section~\ref{sec:3}. Using these bounds, we can then introduce the algorithm in Section~\ref{sec:4}; there we also consider numerical and efficiency aspects. We illustrate the performance of the algorithm in Section~\ref{sec:5} using simulations as well as an application to real data, discussing the results obtained in Section~\ref{sec:6}.

\section{Fréchet means on the sphere and a branching scheme}
\label{sec:2}

The optimisation problem we study here is that of finding so-called Fréchet-$p$-means on the unit sphere. Therefore we consider the sphere as a metric space $(\S^2,d)$, endowed with arc length as the distance, a finite number of points~$x_1,\dots,x_n\in \S^2$ and a real number $p\ge 0$. Then we are looking for the set of all minimisers of the program
\begin{equation}
\label{eq:2}
\min_{m\in \S^2} \quad \frac1n \sum_{i=1}^n d^p(m,x_i)
\end{equation}
where $d(x,y) = \arccos\langle x, y\rangle$ for $x, y \in \S^2$, $\langle\cdot,\cdot\rangle$ denoting the standard dot product in $\R^3$.
The objective function $\hat F_n: \S^2 \to \R, m \mapsto \frac 1n\sum_{i=1}^n d^p (m, x_i)$ of~(\ref{eq:2}) is also referred to as \emph{Fréchet-$p$-function} (or Fr\'echet function in short). 

One can consider (\ref{eq:2}) more generally on metric spaces. If these are fulfilling the Heine-Borel property (i.e. every closed bounded subset is compact), such minimisers always exist, although uniqueness is not guaranteed in general. Hereafter, we will focus on the spherical case in which at least existence is ensured.

The main idea of Branch and Bound in continuous optimisation is to divide (\textit{branch}) the feasible set step-by-step into smaller subsets. One then tries to eliminate, under usage of suitable lower bounds, subsets which cannot contain a minimum (\textit{bound}). Using this method we obtain an approximation of the set of all minimisers; here, we will use the definition of an \textit{$(\eps,\delta)$-approximation}. Recall that for $\eps,\delta>0$ a set $A\subseteq \S^2$ is an $(\eps,\delta)$-approximation of the set of minimisers $X$ of a function $f: \S^2\to\R$ if for all $a\in A$ it holds that $f(a) - \min_{m\in \S^2} f(m) \le \eps$ and for all $x\in X$ there exists an $a\in A$ with $d(a,x)< \delta$.

To apply a Branch and Bound algorithm to our problem, we have to first specify a rule how to subdivide the sphere as well as a discarding rule. For our purposes it is most appropriate to divide the sphere in spherical triangles. These are generated by a triple of non-coplanar vectors which are called \textit{vertices} of the triangle. 
More precisely, a spherical triangle is the intersection of the (closed) convex cone spanned by its vertices with the sphere.
Here, we start with the triangles induced by the vertices of a regular octahedron inscribed in the sphere as an initial triangulation. A triangle is then divided in one branch step at its midpoint of the longest side; in case of non-uniqueness we choose a side according to a deterministic rule. The so generated sequence of triangles fulfils the criterion of exhaustiveness, i.e. their diameters converge to 0, which is needed to show convergence of the algorithm, cf. \cite[p. 204 ff.]{Locatelli2013}. Furthermore, it is easily possible to determine whether a given point lies within a triangle using the following elementary geometric observation.

\begin{lemma}
\label{lemma:1}
	Let $x\in\S^2$ and $\Delta\subseteq\S^2$ be a spherical triangle with non-coplanar vertices $d_1,d_2,d_3\in\S^2$. 
	\begin{enumerate}[label=(\alph*)]
		\item Then $x\in \Delta$ if and only if the solution $\lambda = (\lambda_1, \lambda_2, \lambda_3) \in \R^3$ of the linear equations $\sum_{i=1}^3 \lambda_i d_i = x$ is component-wise non-negative.
		\item Moreover, let $n_{ij}\in\S^2$ be orthogonal to $\Span\{d_i,d_j\}$, $i \neq j$, with the property that for the third vector $d_k$, $k\neq i,j$ we have $\langle d_k, n_{ij} \rangle \ge 0$. Then $x\in\Delta$ if and only if $\langle x, n_{ij}\rangle \ge 0$ for all $i,j\in\{1,2,3\}$, $i \neq j$.
	\end{enumerate}
\end{lemma}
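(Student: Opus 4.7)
The plan is to exploit that $\Delta$ is defined as the intersection with the sphere of the closed convex cone $C = \{\sum_{i=1}^3\lambda_i d_i \mid \lambda_i \ge 0\}$ spanned by $d_1,d_2,d_3$. Since $x \in \mathbb{S}^2$ already lies on the sphere, we have $x\in \Delta$ if and only if $x\in C$, so both statements reduce to characterising membership in the cone $C$.

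For part (a), I would note that non-coplanarity of $d_1,d_2,d_3$ means they form a basis of $\mathbb{R}^3$, so the linear system $\sum_{i=1}^3 \lambda_i d_i = x$ has a unique solution $\lambda\in\mathbb{R}^3$. By definition of $C$, membership $x\in C$ is equivalent to $\lambda$ having only non-negative components, and this is what needs to be shown. This is essentially unpacking definitions once the basis property is in place.

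For part (b), I would prove both inclusions using the decomposition from (a). If $x = \sum_{i=1}^3 \lambda_i d_i$ with all $\lambda_i\ge 0$, then for each pair $i\neq j$, orthogonality of $n_{ij}$ to $d_i$ and $d_j$ reduces the inner product to $\langle x, n_{ij}\rangle = \lambda_k \langle d_k, n_{ij}\rangle \ge 0$, since both factors are non-negative. Conversely, if all three inner products $\langle x, n_{ij}\rangle$ are non-negative, the same identity yields $\lambda_k \langle d_k, n_{ij}\rangle \ge 0$ for each $k\in\{1,2,3\}$, and by part (a) it remains to conclude $\lambda_k \ge 0$.

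The one subtlety — which I would highlight explicitly as the step needing the full strength of the non-coplanarity hypothesis — is that for the converse direction I need $\langle d_k, n_{ij}\rangle$ to be strictly positive, not merely non-negative as stated. But this is automatic: since $d_1,d_2,d_3$ are non-coplanar, $d_k\notin \Span\{d_i,d_j\}$, so $d_k$ is not orthogonal to $n_{ij}$, and combined with the sign convention $\langle d_k,n_{ij}\rangle\ge 0$ we get $\langle d_k, n_{ij}\rangle > 0$. Dividing then yields $\lambda_k\ge 0$ and finishes the proof. No calculation beyond this is required; the whole argument is a linear-algebraic bookkeeping exercise once the cone description of $\Delta$ is invoked.
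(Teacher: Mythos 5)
Your argument is correct and follows essentially the same route as the paper's (very terse) proof: part (a) is just the definition of the cone spanned by the basis $d_1,d_2,d_3$, and part (b) identifies that cone with the intersection of the three half-spaces $\langle \cdot, n_{ij}\rangle \ge 0$. You additionally spell out the detail the paper leaves implicit, namely that non-coplanarity forces $\langle d_k, n_{ij}\rangle > 0$ strictly, which is exactly what makes the converse direction of (b) go through.
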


\begin{proof}
(a) is the algebraic formulation of $x$ being in $\Delta$ if and only if $x$ is in the convex cone spanned by $d_1, d_2, d_3$.
In fact, the latter cone is the intersection of the three (closed) half-spaces determined by the conditions $\langle x, n_{ij} \rangle \ge 0$ in~(b).
\end{proof}


\section{Lower bounds for the Fréchet function}
\label{sec:3}
The second ingredient necessary for designing a Branch and Bound method is an appropriate discarding rule. For this, we first note the following simple result. 

\begin{lemma}
\label{lemma:2}
	Let $g$ be a lower bound of our objective function $f: \S^2\to \R$  on a subset $\Delta\subsetneq \S^2$ and let $\bar m \in \S^2\setminus \Delta$. If $f(\bar m) < g$ holds, then $\Delta$ cannot contain a minimiser of $f$.	
\end{lemma}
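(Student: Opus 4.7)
The plan is to argue by contradiction via a short chain of inequalities. Assume $\Delta$ contains a minimiser $m^\ast\in\Delta$ of $f$ over $\S^2$. Since $g$ is a lower bound of $f$ on $\Delta$, we have $f(m^\ast)\ge g$. On the other hand, $\bar m\in\S^2\setminus\Delta\subseteq\S^2$ is a feasible point, so by the minimality of $m^\ast$ we also have $f(m^\ast)\le f(\bar m)$. Combining these two inequalities with the hypothesis $f(\bar m)<g$ yields $g\le f(m^\ast)\le f(\bar m)<g$, a contradiction.

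The key point to make explicit is that the conclusion does not require $\bar m$ to be optimal or even close to optimal: it merely exploits that $\bar m$ lies in the feasible set $\S^2$, so any point of $\Delta$ can be beaten by $\bar m$ once $f(\bar m)<g$. There is no real obstacle here; the statement is essentially a restatement of the general Branch and Bound discarding principle, specialised to the sphere. I would keep the proof to two or three lines and not introduce further notation.
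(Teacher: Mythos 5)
Your proof is correct, and it is exactly the elementary contradiction argument the paper leaves implicit (the authors state Lemma~\ref{lemma:2} as a ``simple result'' without proof). Nothing is missing: the chain $g \le f(m^\ast) \le f(\bar m) < g$ is all that is needed, and your remark that only the feasibility of $\bar m$ is used correctly captures the Branch and Bound discarding principle.
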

Hence, our aim is to construct a lower bound for our objective function, the Fréchet function, on a spherical triangle. 

Since the Fréchet function is Lipschitz continuous the construction of a lower bound using the Lipschitz constant is possible. However, this approach leads to lower bounds which are too weak for our purposes. We thus will now construct sharper estimates.

Having non-negative lower bounds $g_i$ on $\Delta$ for the functions $m\mapsto d(m,x_i)$ for each $i\in\{1,\dots,n\}$ we obtain a lower bound for the Fréchet function via 
\begin{equation}
\label{eq:3}
	\frac1n \sum_{i=1}^n g_i^p.
\end{equation}
In fact, we can calculate the minimum $\min_{m\in \Delta} d(m,x_i)$ of the distance function on a spherical triangle analytically. This is of course the best lower bound for the distance function. The following proposition shows how to calculate this minimum.
\begin{theorem}
	
	\label{prop:1}
	
	Let $x\in\S^2$ and consider a spherical triangle $\Delta\subseteq\S^2$ with non-coplanar vertices $d_1,d_2,d_3\in\S^2$. Furthermore, for $i,j\in\{1,2,3\}$, $i \neq j$ let
	\begin{equation}
	\label{eq:4}
	g_{ij} = \begin{cases}
        \big|\arcsin\big( \langle x, d_i\times d_j \rangle \big)\big|, &\text{ if } x = \lambda_1 d_i + \lambda_2 d_j + \lambda_3 \left( d_i\times d_j \right)\\[-1pt]& \text{ for some }\lambda_1,\lambda_2\ge 0, \lambda_3 \in \R, \\[3pt]
	\min\big\{ d(x,d_i), d(x,d_j) \},&\text{ otherwise }
	\end{cases}
	\end{equation}
	which is the distance of $x$ to the great circle arc connecting $d_i$ and $d_j$.
	
	Then for the distance of $x$ to $\Delta$ we have
	\begin{equation}
	\label{eq:5}
	d(x,\Delta) = \begin{cases}
	0, & \text{if }x \in \Delta,\\
	\min\left\{g_{12},g_{23}, g_{31}\right\}, & \text{otherwise}.
	\end{cases}
	\end{equation} 	
	In particular, the average $\frac1n \sum_{i=1}^n d(x_i,\Delta)$ is a lower bound for $\hat F_n$ on $\Delta$.
\end{theorem}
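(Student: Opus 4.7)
First, observe that if $x \in \Delta$ then $d(x,\Delta) = 0$, so the formula holds trivially. Assume henceforth $x \notin \Delta$. By compactness the infimum is attained at some $m^* \in \Delta$. The plan is to show $m^*$ must lie on the boundary of $\Delta$. Since $d(x,m) = \arccos\langle x,m\rangle$ is strictly decreasing in $\langle x,m\rangle$, minimising $d(x,\cdot)$ on $\Delta$ is equivalent to maximising $m \mapsto \langle x,m\rangle$ over $\Delta$. A Lagrange-multiplier computation shows that this functional, restricted to $\S^2$, has only $\pm x$ as critical points, with $x$ a strict maximum and $-x$ a strict minimum. Since $x \notin \Delta$, the constrained maximum cannot occur at an interior critical point, so $m^*$ lies on the relative boundary $\partial\Delta$, which is the union of the three great-circle arcs $\gamma_{ij}$ joining $d_i$ and $d_j$.

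It therefore suffices to compute $d(x,\gamma_{ij})$ for each pair and take the minimum. I would exploit the orthogonal decomposition of $\R^3$ into $\Span\{d_i,d_j\}$ and its orthogonal complement $\R(d_i\times d_j)$: writing $x = \lambda_1 d_i + \lambda_2 d_j + \lambda_3 (d_i\times d_j)$ (possible by non-coplanarity), the closest point on the full great circle through $d_i, d_j$ is $y/\|y\|$ with $y := \lambda_1 d_i + \lambda_2 d_j$, and parametrising this circle by arc length, $t \mapsto \langle x, c(t)\rangle$ is a shifted cosine and hence unimodal. Two cases then arise. If $y/\|y\|$ actually lies on the arc $\gamma_{ij}$, then the spherical distance from $x$ to this foot equals the claimed $|\arcsin\langle x, d_i\times d_j\rangle|$, via the Pythagorean identity $\sin^2(\text{dist}) + \cos^2(\text{dist}) = 1$ together with $\cos(\text{dist}) = \|y\|$. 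The sign condition $\lambda_1, \lambda_2 \ge 0$ is precisely the condition that $y/\|y\| \in \gamma_{ij}$, by the same convex-cone reasoning as in Lemma~\ref{lemma:1}(a) applied in the plane $\Span\{d_i,d_j\}$. Otherwise the perpendicular foot lies outside the arc; since $\langle x, c(t)\rangle$ is strictly monotone on any interval missing the mode of the shifted cosine, the distance on the arc is then minimised at an endpoint, yielding $\min\{d(x,d_i), d(x,d_j)\}$.

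Combining the two cases yields formula~(\ref{eq:4}) for each pair, and taking the minimum over $(i,j)$ produces~(\ref{eq:5}). The concluding lower-bound statement then follows by applying~(\ref{eq:3}) with $g_i := d(x_i, \Delta)$, which is by construction a non-negative lower bound for $m \mapsto d(m, x_i)$ on $\Delta$. The main technical obstacle is the case analysis for $g_{ij}$: pinning down that $\lambda_1, \lambda_2 \ge 0$ is exactly the condition for the perpendicular foot to land on the short arc rather than outside it, and then verifying the $\arcsin$ identity in that case. Once those are in place, the ``monotonicity forces an endpoint minimum'' conclusion in the opposite case is immediate from the shifted-cosine structure of $\langle x, c(t)\rangle$, and the reduction to the boundary via the critical-point analysis is then routine.
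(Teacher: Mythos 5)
Your proposal is correct and follows essentially the same route as the paper: reduction of the minimiser to the three boundary arcs (the paper invokes unspecified ``monotonicity arguments'' where you use the linear-functional/critical-point argument), distance to the full great circle via the orthogonal decomposition of $x$ along $\Span\{d_i,d_j\}$ and its normal direction, the case split on whether the foot $y/\|y\|$ lies on the arc --- characterised exactly by $\lambda_1,\lambda_2\ge 0$ --- and endpoint minimisation otherwise. The only point you skip that the paper treats separately is the degenerate case $x\perp\Span\{d_i,d_j\}$, i.e.\ $\lambda_1=\lambda_2=0$, where $y=0$ and ``the foot $y/\|y\|$'' is undefined; there every point of the circle is at distance $\tfrac{\pi}{2}$ and the first branch of (\ref{eq:4}) still returns the correct value, so this is a one-line patch. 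Note also that your Pythagorean computation, carried out carefully, gives $\sin d(x,\Gamma)=|\lambda_3|\,\|d_i\times d_j\|=|\langle x, d_i\times d_j\rangle|/\|d_i\times d_j\|$, so the $\arcsin$ identity really holds with the \emph{normalised} cross product $\tfrac{d_i\times d_j}{\|d_i\times d_j\|}$ (exact when the vertices are orthogonal, and otherwise the unnormalised expression is still a valid lower bound) --- this is the same normalisation that the statement and the paper's own proof silently drop, so it is not a defect of your approach specifically, but it is worth making explicit.
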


\begin{proof}
	
	Equation~(\ref{eq:5}) holds for sure in case of $x\in\Delta$. Otherwise, because of monotonicity arguments the minimum will be attained at the boundary of $\Delta$ which are three great circle arcs. Therefore, it is sufficient to show that for the distance of $x$ to such an arc Equation~(\ref{eq:4}) holds. We consider w.l.o.g. the great circle arc $\kappa$ between $d_1$ and $d_2$. 
	
	If $x\bot \Span\{d_1,d_2\}$, then $x=\pm \frac1{\|d_1\times  d_2\|} \left( d_1\times d_2 \right)$ and every arc connecting $x$ and $\kappa$ has the same length $\frac \pi 2$. Thus, $d(x,\kappa) = \frac \pi 2 = g_{12}$.
	
	In the case $x \nbot \Span\{d_1,d_2\}$, we compute at first the distance of $x$ to the whole great circle $\Gamma$ (containing $\kappa$), see Figure~\ref{fig:half}. The point $\bar m$ minimising the distance in arc length from $x$ to $\Gamma$ is the same as the one minimising the Euclidean distance from $x$ to $\Gamma$ since both result from each other by a monotonous transformation. For the latter one, application of Pythagoras' theorem leads to $\|x - \bar m\|^2 = \|x - p\|^2 + \| p - \bar m\|^2$ where $p$ is the orthogonal projection of $x$ onto $\Span\{d_1,d_2\}$. Because the first term is constant, only the second term has to be minimised and the point $\bar m\in \Gamma$ with the smallest Euclidean distance to $p$ is $\bar m = \frac1{\|p\|} p$. This exists and is unique since $x \nbot \Span\{d_1,d_2\}$. The distance from $x$ to $\Gamma$ is therefore given by $d(x,\Gamma)=d(x,\bar m)$. 
	
	\begin{figure}
		\centering
		\includegraphics[width=.7\textwidth]{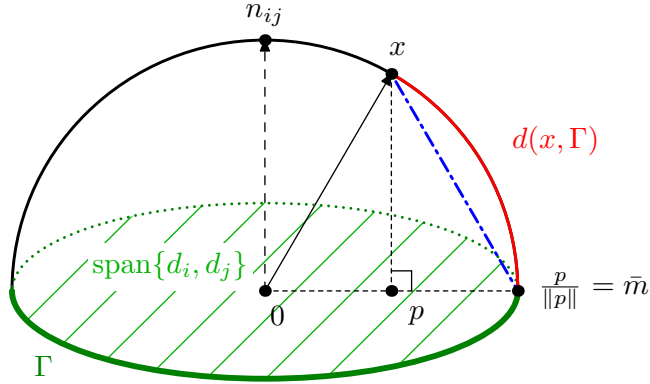}
		\caption{ Upper half of the sphere with marked distance (red) of $x\in\S^2$ from the great circle $\Gamma$ induced by $\Span\{d_i,d_j\}$ (green). The minimiser $\bar m$ is given by $\argmin\left\{ \|x-m\|: m\in \Gamma \right\} = \frac p{\|p\|}$. The Euclidean distance $\|x - \bar m\|$ is marked in blue ($- \cdot $)}
		\label{fig:half}
	\end{figure}
	
	Assume now, that $n_{12} = \frac{d_1 \times d_2}{\Vert d_1 \times d_2 \Vert}$ (which -- possibly up to sign -- equals the corresponding vector in Lemma~\ref{lemma:1}) and $x$ lie in the same half space induced by $\Span\{d_1,d_2\}$. Then $n_{12}$, $x$ and $\bar m$ lie on the same great circle arc and we have $d(\bar m, x)+d(x,n_{12}) = d(\bar m, n_{12}) = \frac \pi 2$. Using the identity $\arccos(\cdot) +\arcsin(\cdot) = \frac \pi 2$ we obtain $d(x,\Gamma) = \arcsin \langle x,d_1\times d_2\rangle$. The case $-\left(d_1\times d_2\right)$ and $x$ lying in the same half space can be treated analogously using the property $\arcsin(-t)=-\arcsin(t)$ for all $t\in[-1,1]$. 
	
	We consider now the computation of the distance of $x$ to the great circle arc $\kappa$. If the minimiser $\bar m = \argmin_{m\in \Gamma}d(x,m) \in \kappa$, then we can compute the distance as shown above. Otherwise it can be shown, under usage of some monotonicity arguments, that one of the endpoints $d_1,d_2$ of the arc has the smallest distance to $x$. 
	
	It remains to show that $\bar m \in \kappa$ is equivalent to the first condition in Equation~(\ref{eq:4}) holds. For that, we consider the great circle arc $\kappa$ connecting $d_1,d_2$ and assume $x\nbot\Span\{d_1,d_2\}$. From our reasoning above we already know that the minimiser of $\min_{m\in\Gamma}d(x,m)$ is obtained via normalising the orthogonal projection of $x$ to $\Span\{d_1,d_2\}$. The points obtained from normalising points of $\kappa$ can be written as $\lambda_1 d_1 + \lambda_2 d_2$ with $\lambda_1,\lambda_2\ge 0$. Since the orthogonal part to $\Span\{d_1,d_2\}$ vanishes after projection, any point $x$ having a minimiser in $\kappa$ can be written as $\lambda_1 d_1 + \lambda_2 d_2 + \lambda_3 \left( d_1 \times d_2 \right)$ with $\lambda_1,\lambda_2\ge 0$ and $\lambda_3\in\R$ which is the first condition in Equation~(\ref{eq:4}). 
\end{proof}
\pagebreak
\begin{remark}
One can easily show that this lower bound fulfils the exactness in the limit property, i.e. if the diameter of the triangle converges to zero then the bound converges to the function value at the limit point, cf. \cite{Locatelli2013}. Thus the convergence of our method follows with the aid of elementary estimates and the triangle inequality. Furthermore, it is efficiently possible to check the conditions in Equation~(\ref{eq:4}) and (\ref{eq:5}) using the results of Lemma~\ref{lemma:1}.
\end{remark}

\section{The Spherical Branch and Bound algorithm ($\S$BB algorithm)}
\label{sec:4}

The algorithm derived here is based on ideas from \cite{Eichfelder2016}, where a similar algorithm was presented which aimed at finding globally optimal minimisers, but over a box in a linear space. Adapting this algorithm to our optimisation problem over the sphere leads to the following \textit{Spherical Branch and Bound algorithm} or short \textit{$\S$BB algorithm} presented below. In the following, we explain the basic steps in detail.

\begin{algorithm}
	\SetKwData{Left}{left}\SetKwData{This}{this}\SetKwData{Up}{up}
	\SetKwFunction{Union}{Union}\SetKwFunction{FindCompress}{FindCompress}
	\SetKwInOut{Input}{Input}\SetKwInOut{Output}{Output}
	
	\Input{Data points $x_1,x_2,\dots, x_n$, $p\ge 0$, accuracies $\eps>0,\delta>0$, initial subdivision $\Delta_1, \Delta_2,\dots,\Delta_k$ of the sphere $(\S^2,d)$ where $\bigcup_{i=1}^k \Delta_i \supseteq \S^2$}
	\Output{$(\eps,\delta)$-approximation of the set of Fréchet-$p$-means}
	\BlankLine
	
	Initialise list $\LO = \left\{ X_1 = \{ \Delta_1,s_1,v_1, u_1 \}, \dots, X_k = \{ \Delta_k, s_k, v_k, u_k  \}  \right\} $ where $s_i=\diam(\Delta_i)$ is the diameter of $\Delta_i$, $v_i = \hat F_n(\Mid(\Delta_i))$ and $u_i = \lb(\Delta_i)$. \\
	Set $A = \emptyset$, $X^\star = \{\Delta_1, s_1, v_1, u_1\} $, $X_{\mathrm{act}} = X^\star$, $u^\star = -\infty$, $v_{\mathrm{act}} = \infty$, $v_{\mathrm{glob}}=v_{\mathrm{act}}$. \\
	
	\While{$\LO \neq \emptyset$}{
		
		$\LO \leftarrow \LO \setminus \left\{ X^\star \right\} $\\
		$sB(X^\star) \leftarrow \textsc{Branch}( X^\star )$
		
		\For{$\mathrm{all }$ $\bar X = \{ \bar \Delta,\bar s, \bar v, \bar u \} \in sB(X^\star)$ }{
			
			Compute value $\bar v$ of $\hat F_n$ in $\Mid(\bar X)$, diameter $\bar s$ of $\bar X$ and lower bound $\bar u = \lb(\bar X)$.  \\
			\If{ $\bar u \le v_{\mathrm{glob}}$}{
				
				Add $ \{ \bar \Delta, \bar s, \bar v, \bar u \}$ at the end of $\LO$.\\
				\If{ $\bar v \le v_{\mathrm{act}}$ }{
					$X_{\mathrm{act}} \leftarrow \bar X$, 
					$v_{\mathrm{act}} \leftarrow \bar v$,
					$v_{\mathrm{glob}} \leftarrow \min\{ v_{\mathrm{act}}, v_{\mathrm{glob}} \}$\\
					Delete all elements $\{\Delta,s,v,u\}$ from $\LO$ where $u>v_{\mathrm{glob}}$.
					
				}
				
			}
		}
		\If{$\LO \neq \emptyset$ }{
			
			Define $X^\star = \{\Delta^\star, s^\star, v^\star,  u^\star\}$ as the first element of $\LO$ with $u^\star = \min_{ \{\Delta,s, v, u\}\in \LO } u$.\\
			\While{$\LO \neq \emptyset \wedge v_{act} - u^\star \le \frac\eps2$}{
				\If{$X_{\mathrm{act}}\in\LO \wedge s_{\mathrm{act}} \le \delta$ }{
					Set $A \leftarrow A \cup \left\{ X_{\mathrm{act}}\right\}$.\\
					Delete $X_{\mathrm{act}}$ from $\LO$.\\
				}
				
				\If{$X_{\mathrm{act}}\in\LO$ } { 
					$X^\star \leftarrow X_{\mathrm{act}}$, go to line~3. \\ 
				}

				\If{$\LO \neq \emptyset$ }{
					Define $X^\star\leftarrow \{ \Delta^\star, s^\star, v^\star, u^\star \}$ as the first element of $\LO$ with $u^\star = \min_{\{\Delta,s,v,u\}\in \LO } u$.\\
					Compute $X_{\mathrm{act}}$ as the first element in $\argmin_{ \{\Delta,s, v, u\}\in \LO } v$ and update upper bound $v_{\mathrm{act}} \leftarrow \hat F_n(\Mid(X_{\mathrm{act}}))$.\\
				}
			}
		}	
		
	}
	
	\BlankLine
	\vfill
	\caption{$\S$BB algorithm for computing a $(\eps,\delta)$-approximation of all Fréchet-$p$-means}
	\label{alg:algorithm}
\end{algorithm}

Initially, we start with generating a list of all elements to be visited. Every item consists of a description of the triangle $\Delta$, its diameter $s$ as well as a lower and upper bound for the objective function on this set. Here, the upper bound is calculated as the function's value at a certain point $\textsc{Mid}(\Delta)$. We use the centroid of a spherical triangle for this. For the lower bound~$\textsc{Lb}$, we use the bound obtained from Theorem~\ref{prop:1}.

The outer loop (line 3--31) contains
\begin{itemize}
	\item a branch part (line 5), in which we subdivide the currently chosen triangle using the \textsc{Branch} rule of Section~\ref{sec:2},
	
	\item a bounding part (line 6--15), where elements of the branch part will be added to the list~$\LO$ and elements fulfilling the discarding criterion from Lemma~\ref{lemma:2} with the lower bound \textsc{Lb} obtained from (\ref{eq:3}) combined with~(\ref{eq:5}), will be eliminated from $\LO$,
		
	\item a selection part (line 16--32), where the next element of the list is selected. Then we decide whether this element can be added to the list $A$ of the $(\eps,\delta)$-approximation (line 18--21) or if it has to be revisited in a later iteration. In the first case, we delete the element from $\LO$ and select the element with the smallest lower bound in $\LO$ as the new current element (line 25--28).
\end{itemize} 

For an efficient implementation it is important to accelerate the computation steps as much as possible. An important key role plays the calculation of the lower bounds on the triangles resulting from the branch steps. 

To compute the lower bound, we have to make several tests. For a triangle~$\Delta$ with vertices $d_1,d_2,d_3\in\S^2$ we have to decide whether $x_\ell\in \Delta$ and if this is not the case, we have to test whether 
$x_\ell \in \big\{ \lambda_1 d_i + \lambda_2 d_j + \lambda_3 \left( d_i\times d_j \right) \, \big| \, \lambda_1,\lambda_2\ge 0, \lambda_3 \in \R \big\}$ for $(i,j)\in\left\{ (1,2), (2,3), (3,1) \right\}$ and for all $\ell=1,\dots,n$. For the second case, we have to compute the cross product of the vertices anyway, so here it is more efficient to take the criterion of Lemma~\ref{lemma:1}~(b) for testing. Then we only have to compute three scalar products instead of solving a linear equation. To evaluate the last three tests, we calculate a $QR$ decomposition of the matrix $\begin{pmatrix}
d_i & d_j & d_i\times d_j
\end{pmatrix}$ where $Q$ is orthogonal and $R$ is an upper triangular matrix. Using this, the linear equation can be solved quickly via back-solving $Rx = Q^\top x_\ell$ for $x$. De facto, we calculate there also only three scalar products for the right hand side; the additional effort consists in computing the $QR$ decomposition. But since the decomposition remains the same for all $\ell$, this seems to be worthwhile especially for large $n$.

Further technicalities like considering numerical inaccuracies have been taken into account in the implementation used in Section~\ref{sec:5}.

\section{Numerical test on simulated and real data}
\label{sec:5}
Now, we will compare the performance of our $\S$BB algorithm for the case $p = 2$ given four different types of simulated datasets where the points are drawn as indepent and identically distributed samples from differrent distributions on the sphere. In particular, we consider a uniform distribution on a half sphere as well as on the whole sphere, a sample of points forming a tetrahedron rotated around a random angle, and two random points lying diametrically opposed. For points continuously distributed on the sphere one obtains an almost surely unique Fréchet mean, cf.~\cite[Corollary 2.3]{Arnaudon14}. In the first case, this is even the unique local minimizer of the Fréchet function. One therefore would expect a fast convergence behaviour of $\S$BB. For uniformly distributed data on the sphere, one obtains a statement about the average performance for determining a single Fréchet mean. In the other two cases, there are non-unique Fréchet means -- for the tetrahedron there are four, for the diametrically opposed points even infinitely many minimisers, namely all points on the corresponding equator.

Recall that for drawing samples from a uniform distribution on the sphere it suffices to generate three-dimensional standard normally distributed vectors and normalise them afterwards. 

For our simulation we used our own implementation in the programming language \texttt{R}, cf. \cite{R}, a computer with operating system Windows~7, Intel~Core i5-5200U 2{.}2~GHz CPU and 4~GB RAM. The results are listed in Table~\ref{tab:results} and are based on 100 repetitions each. Besides of the distribution the sample was drawn from, we also state the sampling size $n$, the theoretical number $q$ of Fréchet means, the runtime, the number of iterations and the surface measure $\nu(A)$ of the $(\eps,\delta)$-approximation $A$ for $\eps = \delta = 10^{-1}$ in relation to the whole surface of the sphere. All results are listed as means $\pm$ empirical standard deviation.

\begin{table}[H]
	\centering
	\begin{tabular}[h]{l|c|c|c|c|c}
		distribution & $n$ & $q$  & time (in $s$ $\pm$ s.d.) & iterations ($\pm$ s.d.) & $\nu(A)$ ($\pm$ s.d.)\\
		\hline uniform on & \phantom{0}10 & 1 & $\phantom{000.}3{.}0\pm \phantom{00} 1{.}6$ & $\phantom{00.}468\pm\phantom{0.}145$ & $\phantom{0}1{.}3 \% \pm 0{.}1 \%$ \\
		half sphere & 100 & 1 & $\phantom{00.}13{.}8\pm \phantom{00}2{.}2$ & $\phantom{00.}539\pm\phantom{00.}44$ & $\phantom{0}1{.}3 \% \pm 0{.}1 \%$ \\
		\hline uniform on & \phantom{0}10 & 1 & $\phantom{00.}23{.}1\pm \phantom{0}22{.}5$ & $\phantom{0}1{,}356 \pm \phantom{0.}696$ & $\phantom{0}1{.}9\% \pm 0{.}5 \%$ \\
		sphere & 100 & 1 & $\phantom{0.}309{.}2\pm 239{.}9$ & $\phantom{0}5{,}142\pm 2{,}381$ & $\phantom{0}4{.}7 \%\pm 1{.}7\% $ \\
		\hline tetrahedron & \phantom{00}4 & 4 & $1{,}202{.}0\pm \phantom{0}96{.}2$ & $11{,}791\pm \phantom{0.}106$ & $11{.}6 \%\pm 0{.}1\% $  \\
		\hline diametrical & \phantom{00}2 &$\infty$ & $2{,}438{.}6\pm 243{.}0$ &  $13{,}927\pm \phantom{0.}646$ & $19{.}1 \%\pm 0{.}3\% $\\
	\end{tabular}
	\caption{Performance of the $\S$BB algorithm for computing an $(\eps,\delta)$-approximation $A$ of the set of Fréchet means, $\eps=\delta = 10^{-1}$}
	\label{tab:results}
\end{table}

Obviously, the computation of Fréchet means with the $\S$BB algorithm took more time for samples coming from a uniform distribution on the sphere than for those originating from a uniform distribution on a half sphere. This can be explained by the great number of local minimisers of the Fréchet function in the first case which are eliminated from the list $\LO$ only after a long runtime. Nevertheless, in both cases $\S$BB computes in pretty short time an approximation of the minimiser of moderate surface measure although we chose low accuracies of only $\eps=\delta=0.1$.

Note that for our second type of distribution the empirical standard deviation is relatively high. Since the location of the input data has a high influence on the properties of the Fréchet function, local minimisers can differ in their function value only slightly from the global minimum  whereas in some cases this might not happen. In the first case mentioned, the local minimisers remain in the list $\LO$ for a much longer time which results in a longer running time for $\S$BB.

For non-unique Fréchet means, the $\S$BB algorithm takes a much higher amount of iterations for calculating an approximation of these. This is especially the case for our sample of diametrically opposed points where we have to approximate a whole great circle instead of a finite number of points. This is reflected in a much higher effort the one required for approximating a unique Fréchet mean.

We tested our algorithm also for computing the Fréchet means of a real data set called \texttt{DataB5FisherSpherical} which can be obtained from the \texttt{R} package \texttt{CircNNTSR}, see also \cite[Example B5]{Fisher87}. It contains $n=52$ measurements of magnetic remanence from specimen of sedimentary rock in Queensland. The aim is to determine the direction of the North Magnetic Pole at the time of the rock's formation. Because the sample is originated from the same region the mean direction of the magnetic field would give an approximation for this.

For illustration purposes the data is plotted in Figure~\ref{fig:1}. The depicted points are to be understood as directions of the magnetic field, more precisely as the North Magnetic Poles, of the given specimen. Points shown greyed out are on the opposite site of the spherical surface. The output of the $\S$BB algorithm is shown in Figure~\ref{fig:1}. The $(\eps,\delta)$-approximation covers a small area around the Geographic North Pole. Together with the fact that this is connected, one can derive that there is possibly a unique minimiser of the Fréchet-1-function. These data therefore suggest that the North Magnetic Pole at the time of the rock's formation lay in that area.

\begin{figure}
    \centering  
    \includegraphics[width = \textwidth]{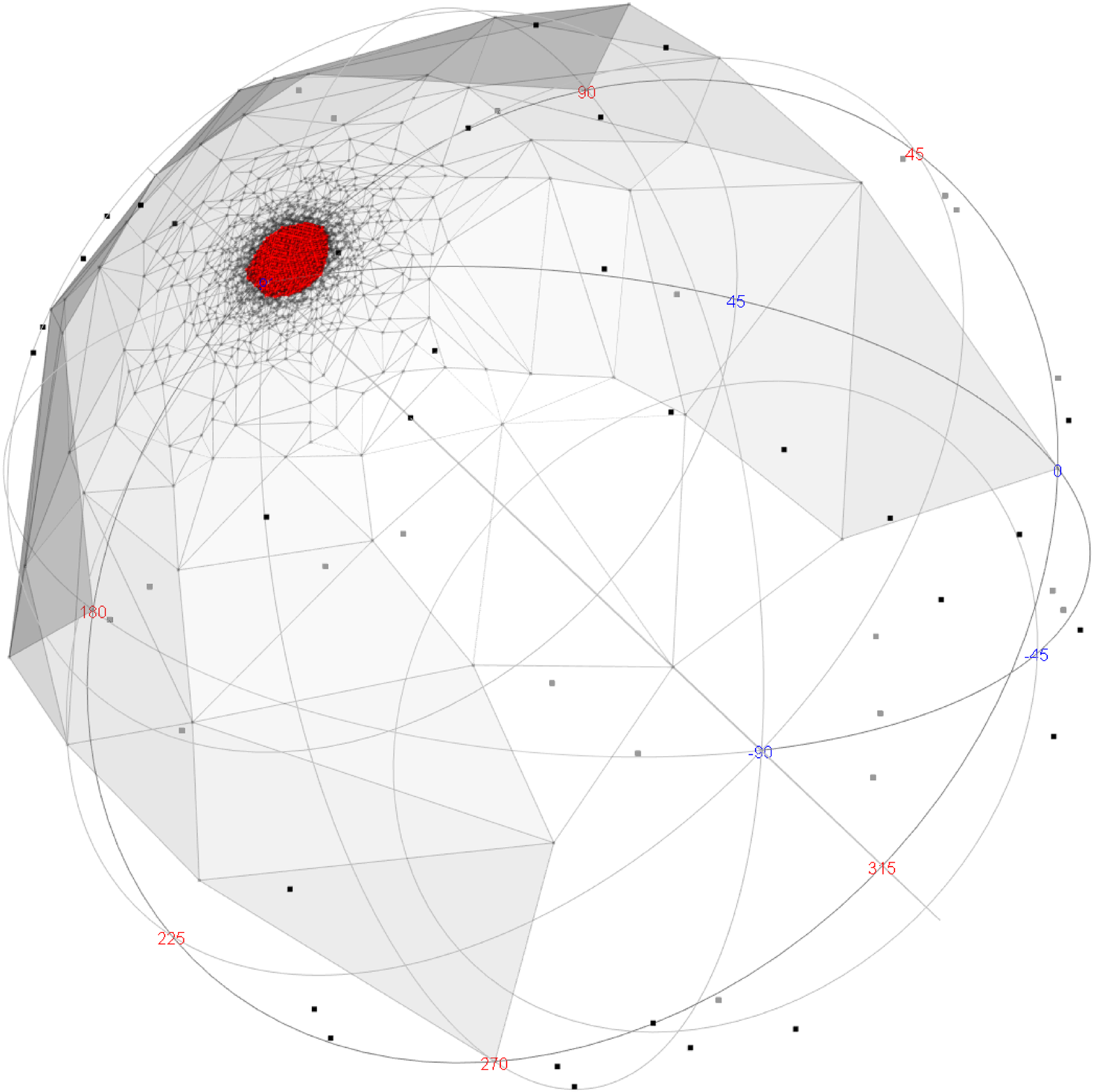}
	\caption{$(\eps,\delta)$-approximation (red) for the Fréchet-1-mean of \texttt{DataB5FisherSpherical} computed by the $\S BB$ algorithm, $(\eps,\delta)=(10^{-2},10^{-1})$, triangles obtained from branching are visualised as Euclidean triangles in grey, the data points given are shown as ($\blacksquare$); longitudes and latitudes are sketched in red and blue, respectively.}
	\label{fig:1}  
\end{figure}

\section{Conclusions and discussion}
\label{sec:6}
We have derived a Branch and Bound method which determines all Fréchet-$p$-means on the sphere for a given finite set of points. The main advantage of $\S$BB is that we do not have to make any assumptions on the given points or the objective function since our lower bound is derived only via geometric considerations. Because of the very general structure of Algorithm~\ref{alg:algorithm}, it will be applicable to other metric spaces too, as long as lower bounds and branching rules are known. In particular, extensions to higher dimensional spheres, real projective spaces, or simplices appears straightforward. We noticed that the running time of $\S$BB depends severely on how the points are distributed on the sphere. This is due to the fact that the location of the points influences the properties of the Fréchet function like number of local and global minimizers. Taking this information into account could improve the performance of our algorithm but requires appropriate criteria or statements for these properties, respectively.

\bibliography{quellen}   

\end{document}